\documentclass[11pt,a4paper]{article} 

\usepackage[a4paper,left=2cm, right=2cm, top=2.5cm, bottom=2.5cm]{geometry}
\usepackage[T1]{fontenc}
\usepackage{graphicx}
\usepackage{xcolor}
\definecolor{myblue}{RGB}{25,25,112}

\usepackage[english]{babel} 
\usepackage{natbib}
\usepackage[
    colorlinks,
    linkcolor=myblue!80!blue,
    citecolor=myblue!80!blue,
    urlcolor=myblue!60!blue
    ]{hyperref}
 
\usepackage{authblk}
\usepackage{amsmath,amssymb, amsthm}
\newtheorem{definition}{Definition}
\newtheorem{theorem}{Theorem}

\usepackage{bm}

\newcommand{\h}[1]{^{(#1)}}
\newcommand{\norm}[1]{{\left\vert\kern-0.25ex\left\vert #1      \right\vert\kern-0.25ex\right\vert_2}}
\newcommand{\normB}[1]{{\left\vert\kern-0.25ex\left\vert #1      \right\vert\kern-0.25ex\right\vert_B}}
\newcommand{\normsq}[1]{{\left\vert\kern-0.25ex\left\vert #1      \right\vert\kern-0.25ex\right\vert^2_2}} 
\newcommand{\mynorm}[1]{{\left\vert\kern-0.25ex\left\vert\kern-0.25ex\left\vert #1      \right\vert\kern-0.25ex\right\vert\kern-0.25ex\right\vert}} 
\newcommand{\scal}[2]{{\langle #1, #2\rangle_2}}
\newcommand{\scalB}[2]{{\langle #1, #2\rangle_B}}
\newcommand{\myscal}[2]{\langle \kern-0.25ex \langle #1, #2\rangle \kern-0.25ex \rangle}

\title{A General Framework for Multivariate Functional Principal Component Analysis of Amplitude and Phase Variation}

\title{A General Framework for Multivariate Functional Principal Component Analysis of Amplitude and Phase Variation}

\author[1]{Clara Happ}
\author[1]{Fabian Scheipl}
\author[2]{Alice-Agnes Gabriel}
\author[1]{Sonja Greven}
\affil[1]{Department of Statistics, LMU Munich, Munich, Germany}
\affil[2]{Department of Geophysics, LMU Munich, Munich, Germany}

\date{}

\begin{document}

\maketitle

\abstract{Functional data typically contains amplitude and phase variation. In many data situations, phase variation is treated as a nuisance effect and is removed during preprocessing, although it may contain valuable information.
In this note, we focus on joint principal component analysis (PCA) of amplitude and phase variation. 

As the space of warping functions has a complex geometric structure, one key element of the analysis is transforming the warping functions to $L^2(\mathcal{T})$. We present different transformation approaches and show how they fit into a general class of transformations. This allows us to compare their strengths and limitations. In the context of PCA, our results offer arguments in favour of the centered log-ratio transformation.

We further embed existing approaches from \citet{HadjipantelisEtAl:2015} and \citet{LeeJung:2017} for joint PCA of amplitude and phase variation into the framework of multivariate functional PCA, where we study the properties of the estimators based on an appropriate metric. The approach is illustrated through an application from seismology. }

\textit{Keywords:} Bayes Hilbert space, Fr{\'{e}}chet variance, Functional data analysis, registration, seismology, transformation of warping functions.

\section{Introduction}

Functional data analysis \citep{RamsaySilverman:2005} is concerned with the broad field of data that comes in the form of functions $x_1 , \ldots , x_N$. For simplicity, consider here $x_i \in L^2(\mathcal{T}),~ i = 1 , \ldots , N$ with $\mathcal{T} = [a,b] \subset {\mathbb R},~ \eta = b - a > 0$.
 Warping approaches  \citep[see e.g.~][and references therein]{MarronEtAl:2015}  aim at decomposing the observed functions $x_i$ into  warping functions $\gamma_i$, that account for the phase variation in $x_i$, and registered functions $w_i$ that account for the amplitude variation in the data, via concatenation: $x_i(t) = (w_i \circ \gamma_i )(t) = w_i(\gamma_i(t))$. As a result, the main features of $w_i$ such as maximum and minimum points will typically be aligned, which reduces the variation and makes it easier to find common structures. The alignment is accomplished by the warping functions $\gamma_i$ that map the individual time of each observed curve $x_i$ to the global, absolute time of the registered curves.
Throughout this paper, we will consider the warping, i.e.\ the decomposition of $x_i$ into $w_i$ and $\gamma_i$, as given. In practical analyses, one would therefore have to choose an appropriate warping algorithm for the data before using our results. For a discussion on possible warping approaches, see e.g.\ \citet{MarronEtAl:2015}.

In many data situations, the main interest lies in the amplitude variation, meaning that warping is considered as part of the preprocessing and the subsequent analysis is  based on the registered functions $w_i$ only.
In others, phase variation is an integral part of the data that carries important information. It is thus worth to be incorporated into the analysis to gain deeper insights into the mechanisms generating the data. As an example, in our seismological application, the seismic waves do not only arrive with different intensity, but also with different time delays depending on their type and the geographical relation between the hypocenter and each seismometer. 
Moreover, in a recent result it was shown that warping may be affected by severe non-identifiability  in the sense that there may exist different representations $x_i = w_i \circ \gamma_i = \tilde w_i \circ \tilde \gamma_i$ with $w_i \neq \tilde w_i$ and $\gamma_i \neq \tilde \gamma_i$ \citep[to see this, set e.g.\ $\tilde w_i = x_i$ and $\tilde \gamma_i = id$, i.e.\ no warping]{ChakrabortyPanaretos:2017}. In order to keep the original information in $x_i$, it seems crucial to analyze both the warping and the registered function.

In this context, two methods that account for both amplitude and phase variation have been proposed for PCA, which is often used in functional data analysis for data exploration, dimension reduction and as a building block for methods such as regression: 
\citet{HadjipantelisEtAl:2015} transform the warping functions to the space of square integrable functions, $L^2(\mathcal{T})$, by differentiating them and taking the log. Next, they calculate a separate functional PCA for the transformed warping functions and the registered functions. 
Dependencies between the warping functions and the registered functions are then modeled via a linear mixed effects model of the principal component scores.
\citet{LeeJung:2017} also propose to transform the warping functions to $L^2(\mathcal{T})$, but then study a combined principal component analysis by ``glueing'' the discretized functions together.  They describe an optimal weighting scheme for the transformed warping functions based on dimension reduction and optimal reconstruction.
The main reason for transforming the warping functions in both approaches is that the space of warping functions on $\mathcal{T}$, $\Gamma(\mathcal{T})$, has a complex geometric structure, as it is for example not closed under addition or scalar multiplication and is not equipped with a natural scalar product, which is essential to define principal components. On the other hand, principal component analysis in $L^2(\mathcal{T})$ is well studied and can easily be applied to the transformed warping functions.

In this note, we discuss the role of the transformation of the warping functions to $L^2(\mathcal{T})$. We define a general class of transformations that includes existing  approaches from the literature as well as alternative transformations based on density functions on $\mathcal{T}$ and allows to compare their strengths and limitations. We offer theoretical arguments in favour of  the centered log-ratio (clr) transformation based on the Bayes space of densities \citep[see e.g.][]{EgozcueEtAl:2006, HronEtAl:2016} and illustrate this by means of a simulated toy example. Second, we embed the methods of \citet{HadjipantelisEtAl:2015} and \citet{LeeJung:2017} for the joint analysis of amplitude and phase variation into  a multivariate functional principal component framework which is based on univariate functional principal component analysis and takes correlations of warping functions and registered functions into account. For general transformations of the warping functions to $L^2(\mathcal{T})$, we study properties of resulting joint principal components based on an appropriate metric.
The proposed method based on the clr transformation is applied to data from a seismological computer experiment yielding spatially referenced high resolution time series of ground velocity measurements. In this context, the joint analysis of amplitude and phase variation is of particular interest as both contain relevant information on the propagation of seismic waves. The new method is shown to give promising results with a meaningful interpretation.

The manuscript proceeds as follows: In Section~\ref{sec:trafoFuns}, we review several transformations from $\Gamma(\mathcal{T})$ to $L^2(\mathcal{T})$ proposed in the literature and give reasons why the centered log-ratio approach should be preferred to other transformations when defining PCA for warping functions.
Section~\ref{sec:Frechet} embeds the existing  methods for joint analysis of amplitude and phase variation into the framework of multivariate functional principal component analysis \citep[MFPCA]{HappGreven:2018} and gives new insights into the properties of the joint principal components. 
The theoretical results are illustrated in Section~\ref{sec:earthquake} by means of data from a  from a multi-physics computational seismology experiment.

\section{Transformation approaches for warping functions}
\label{sec:trafoFuns}

The space of warping functions 
\[
\Gamma(\mathcal{T}) = \{ \gamma: \mathcal{T} \to \mathcal{T} \colon \gamma \text{ is a diffeomorphism\footnotemark}, 
\, \gamma(a) = a,\, \gamma(b) = b\}
\]
has a complex, non-Euclidean geometric structure \citep{LeeJung:2017, SrivastavaBook}, as discussed before. 
In order to simplify the geometry of $\Gamma(\mathcal{T})$ and to derive principal components, it is natural to consider mappings $\Psi: \Gamma(\mathcal{T}) \to L^2(\mathcal{T})$, which allow to transform warping functions to square-integrable functions. In a second step, one can calculate principal components in the well studied space $L^2(\mathcal{T})$ and transform the results back by the inverse map $\Psi ^{-1}: L^2(\mathcal{T}) \to \Gamma(\mathcal{T})$. 
In this section, we present different approaches for this transformation and discuss their strengths and limitations within a new, general framework. Note that here we focus on PCA for the warping functions only and will return to the joint approach in the next section.
\textbf{\footnotetext{A \textit{diffeomorphism} describes a smooth and strictly increasing function \citep{MarronEtAl:2015}. In other words, the warping function $\gamma$ is required to be invertible and should not allow for abrupt jumps or for ``travelling back in time''.}}

\textbf{Square-root velocity transformation:}
For the mapping $\Psi$, \citet{LeeJung:2017} propose the square-root velocity  framework \citep{Tucker:2014,SrivastavaBook}, which consists of two steps: First, the warping functions are mapped to the positive orthant of the (scaled) unit sphere in $L^2(\mathcal{T})$, $S_+^\infty (\mathcal{T}) = \{s \in L^2(\mathcal{T}) \colon \normsq{s} = \eta, s \geq 0\}$. This is realized by the square-root velocity function $\operatorname{SRVF}: \gamma \mapsto \sqrt{\gamma'}$, where $\gamma'$ denotes the first derivative of $\gamma$. It can be shown that this is a bijection from $\Gamma(\mathcal{T})$ to $S_+^\infty (\mathcal{T})$. In order to map the transformed warping functions $q =  \sqrt{\gamma'}$ to $L^2(\mathcal{T})$, choose some $\mu \in S^\infty_+(\mathcal{T})$ and approximate $q$ by functions in the tangent space associated with $\mu$, $T_\mu(\mathcal{T}) = \{v \in L^2(\mathcal{T}): \scal{v}{\mu} = 0\}$. This second transformation step is done via the mapping 
\begin{equation}
\tilde \psi_{S,\mu}: S^\infty_+(\mathcal{T})  \to T_\mu(\mathcal{T}), \qquad
q  \mapsto \frac{\theta}{\eta^{1/2}\sin(\theta)} (q - \cos(\theta) \mu) ,\qquad \theta = \cos ^{-1}\left(\frac{\scal{q}{\mu}}{\eta}\right).
\label{eq:psiS}
\end{equation}
A natural choice for $\mu$ is the Karcher or Fr{\'{e}}chet mean of $q$ \citep{Tucker:2014}. For the special case of warping functions, one may also choose the constant function $q_0(t) \equiv 1$, which is associated with  identity warping $\gamma_0(t) = t$ \citep{LeeJung:2017}. The back transformation to $\Gamma(\mathcal{T})$ is again in two steps: First, apply 
\begin{equation}
\tilde \psi_{S,\mu}^{-1}:  T_\mu(\mathcal{T}) \to S^\infty(\mathcal{T}) , \qquad
v  \mapsto \cos(\norm{v})\mu + \eta^{1/2} \sin(\norm{v}) \frac{v}{\norm{v}}
\label{eq:psiSinv}
\end{equation}
to map $v$ to the sphere $S^\infty(\mathcal{T}) =\{s \in L^2(\mathcal{T}) \colon \normsq{s} = \eta \}$. Then, the results are mapped back to the space of warping functions via $\operatorname{SRVF}^{-1} \colon  S^\infty(\mathcal{T}) \to \Gamma(\mathcal{T})$ with $\operatorname{SRVF}^{-1}(s)(t) = a + \int_0^t s(u)^2 \mathrm{d} u$. 
The overall mapping from $\Gamma(\mathcal{T})$ to $L^2(\mathcal{T})$ is hence given by  $\Psi = \tilde \psi_{S,\mu} \circ \operatorname{SRVF}$ with the inverse $\Psi^{-1} ={\operatorname{SRVF}^{-1}} \circ {\tilde \psi_{S,\mu}^{-1}}$.

The square-root velocity framework has been shown to be advantageous in statistical shape analysis, particularly when combined with the Fisher-Rao metric, which is invariant under warping \citep{SrivastavaBook}. It can also be used to define warping-invariant PCA \citep{Tucker:2014}. 
For the calculation of principal components for warping functions as described above, however, it has two serious shortcomings. 
First, $\tilde \psi_{S,\mu}$ and $\tilde \psi_{S,\mu}^{-1}$ are undefined for the rather interesting points $\mu \in S_+^\infty(\mathcal{T})$ and $v_0 \equiv 0 \in T_\mu(\mathcal{T})$. While, theoretically, they can easily be completed by applying l'H{\^{o}}pital's rule, computational instabilities for functions close to $\mu$ and $v_0$ often occur in practice. At the same time, the projection from the positive orthant of the sphere $S_+^\infty(\mathcal{T})$ to the tangent space $T_\mu(\mathcal{T})$ is a local approximation that works best in the vicinity of $\mu$. On the one hand, this shows that the choice of $\mu$ is important and on the other hand requires the data to be neither too close nor too far from $\mu$.
Second, and even more importantly, the mappings $\tilde \psi_{S,\mu}$ and $\tilde \psi_{S,\mu}^{-1}$ are not inverse to each other as
\begin{equation}
\operatorname{dom}(\tilde \psi_{S,\mu} ) = S_+^\infty(\mathcal{T}) \subsetneq S^\infty(\mathcal{T})  = \operatorname{im}(\tilde \psi_{S,\mu}^{-1}) 
\qquad \text{and} \qquad
\operatorname{im}(\tilde \psi_{S,\mu} )  \subsetneq T_\mu(\mathcal{T})  = \operatorname{dom}(\tilde \psi_{S,\mu}^{-1}).
\label{eq:imdomPsiS} 
\end{equation}
While a theoretical proof is given in the appendix, the main problem can be seen at a glance in Fig.~\ref{fig:imPsiS}:
\begin{figure}
\begin{center}
\includegraphics[width = 0.35\textwidth]{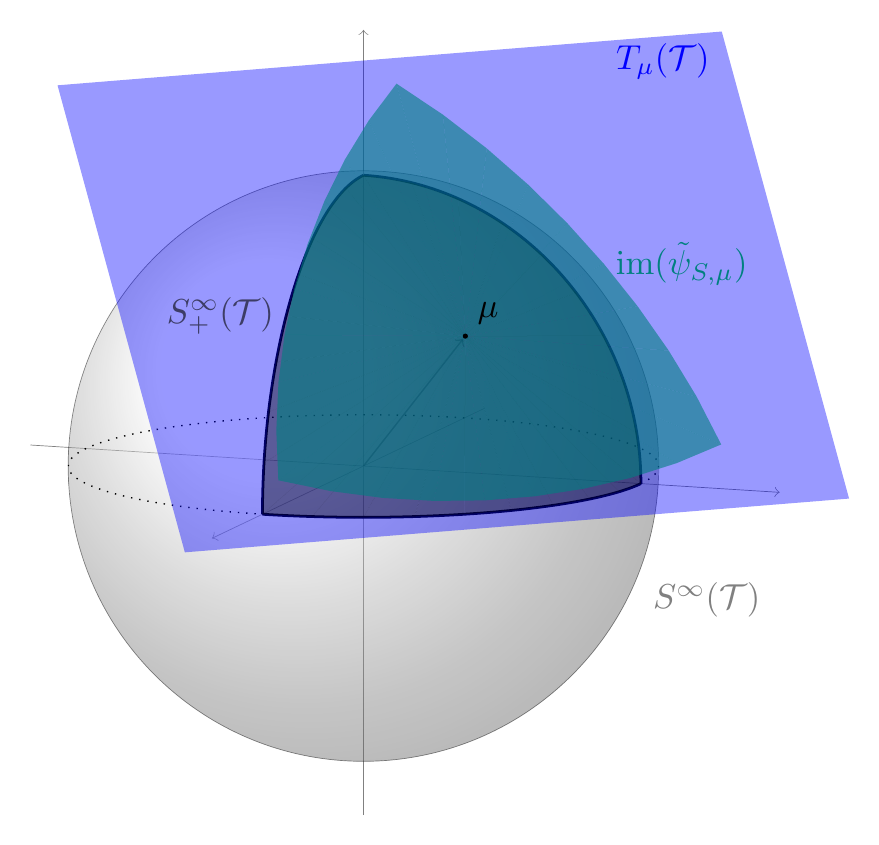}
\end{center}

\caption{
Illustration of the sphere $S^\infty$ and its positive orthant $S_+^\infty$ (dark grey) and the projection of this orthant to the tangent space $T_\mu$ via $\tilde \psi_{S,\mu}$ (green).
}
\label{fig:imPsiS}
\end{figure}
The domain $S_+^\infty(\mathcal{T})$ and the image $\operatorname{im}(\tilde \psi_{S,\mu})$ of the projection $\tilde \psi_{S,\mu}$ are proper subsets, and therefore ``smaller'', than the full sphere $S^\infty(\mathcal{T})$ and the tangent space $T_\mu(\mathcal{T})$, which form the image and domain of the back transformation $\tilde \psi_{S,\mu}^{-1}$.
For all SRVFs $q \in S_+^\infty(\mathcal{T})$ which have been projected to $T_\mu(\mathcal{T})$ by $\tilde \psi_{S,\mu}$, there is a unique mapping back to $S_+^\infty(\mathcal{T})$, since the restriction $\tilde \psi_{S,\mu}^{-1} \vert_{\operatorname{im}(\tilde \psi_{S,\mu} )}$ is a bijection \citep{SrivastavaBook}. 
 As soon as one leaves the subset $\operatorname{im}(\tilde \psi_{S,\mu} )$, however, there is no guarantee that the image of an arbitrary function $v \in T_\mu$ under $\tilde \psi_{S,\mu}^{-1}$ is again in the positive orthant of $S^\infty(\mathcal{T})$ and thus represents a valid SRVF. As $\operatorname{SRVF}^{-1}$ is not injective, the final transformation back to $\Gamma(\mathcal{T})$ will always yield a valid warping function, but potentially with atypical structure.
In order to see that leaving $\operatorname{im}(\tilde \psi_{S,\mu} )$ is likely to happen in practice, consider the following toy example:

\textit{Example:}
Let $\mathcal{T} = [0,1]$. Then $\gamma(t) = t^k$ clearly is a valid warping function for all $k > 0$. The associated square-root velocity function is given by $q(t) = \sqrt{\gamma'(t)} = \sqrt{k t^{k-1}}$. Choosing the SRVF of the identity warping function $\gamma_0$ for $\mu$, the approximation of $q$ in the tangent space $T_\mu$ is given by
$
v = \tilde \psi_{S,\mu}(q)$ using $\theta = \cos^{-1}(\scal{q}{\mu}) = \cos ^{-1}(\frac{2\sqrt{k}}{k + 1})$. Fig.~\ref{fig:exampleSRVF} illustrates $N = 50$ warping functions $\gamma_{i}(t) = t^{k_i},~i = 1, \ldots , N$ together with their SRVFs $q_i$ and the tangent-space approximations $v_i$ (grey curves). The values $k_i$ are generated according to a $\operatorname{Ga}(5,5)$ distribution, such that ${\mathbb E}(k_i) = 1$, which corresponds to identity warping, and $\operatorname{Var}(k_i) = 0.2$ for intermediate variation.
The first eigenfunction $\hat \phi_1$ of the tangent-space approximations in $L^2(\mathcal{T})$ explains more than $95\%$ of the variability in $v_i$ and is shown in black in Fig.~\ref{fig:exampleSRVF}. Note that $\hat \phi_1$ takes values below $-1$ in the left part of $\mathcal{T}$, meaning that $\hat \phi_1$ is outside $\operatorname{im}(\tilde \psi_{S,\mu} )$. Since the eigenvalue $\hat \lambda_1 = 0.036$ is rather small, visualizing the effect of the first principal component as scaled perturbation from the mean ($\hat \mu \pm \hat \lambda_1^{1/2} \hat \phi_1$, blue curves in Fig.~\ref{fig:exampleSRVF}) yields valid SRVFs and sensible warping functions after back transformation. However, transforming the original principal component $\hat \phi_1$ back via $\tilde \psi_{S,\mu}^{-1}$ yields a function $q_\phi \in S^\infty(\mathcal{T})$, but outside $S_+^\infty(\mathcal{T})$, as it has negative values and thus does not represent typical structures of the $q_i$. The back transformation to $\Gamma(\mathcal{T})$ yields a valid warping function $\gamma_\phi$, but again with a very different form than the original data. The green curve in the right of  Fig.~\ref{fig:exampleSRVF} has a more pronounced curvature than the curves in the original sample. Reconstructing the green curve based on the first principal component yields a function that is shifted towards the identity warping function and does not represent the original green curve well.

This simple example shows that whenever one uses the projected SRVFs $v_i$ for statistical analyses in $T_\mu(\mathcal{T})$ whose results are not guaranteed to stay within $\operatorname{im}(\tilde \psi_{S,\mu} )$, there is a risk of obtaining atypical results on the level of the SRVFs and of the warping functions. Since $\operatorname{im}(\tilde \psi_{S,\mu} )$ is not closed under vector space operations in $T_\mu(\mathcal{T})$, this risk occurs for example when calculating principal components of $v_i$, representing $v_i$ in a truncated basis expansion or when using $v_i$ as covariates in a regression. In all these cases, our results show that it is very likely that transforming the results of the analysis from $T_\mu(\mathcal{T})$ to $\Gamma(\mathcal{T})$ yields warping functions with atypical structure, which may be hard to interpret. As $T_\mu(\mathcal{T})$ is a local approximation, anomalies  may even occur if the results are within $\operatorname{im}(\tilde \psi_{S,\mu} )$, but close to the boundary. An example for this can be seen in the prediction of the green curve in Fig.~\ref{fig:exampleSRVF}.

\begin{figure}
\begin{center}
\includegraphics[width = \textwidth]{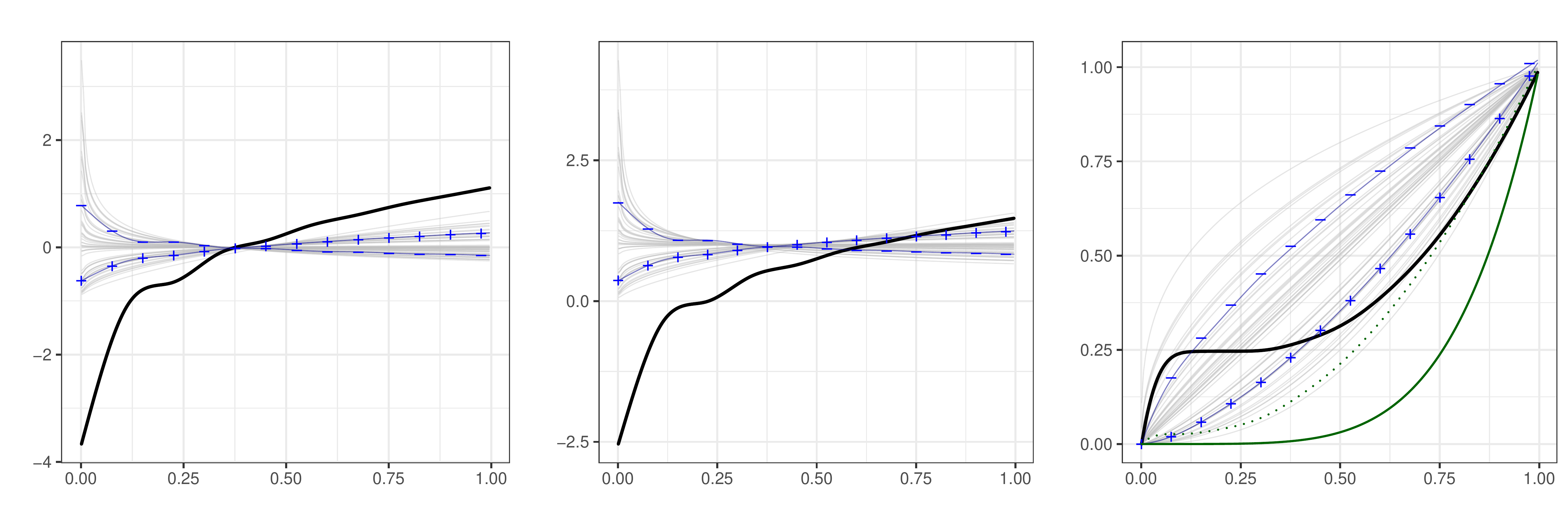}
\end{center}

\caption{
PCA of warping functions based on $\tilde \psi_{S,\mu}$ for simulated warping functions and different stages of transformation. The grey functions represent the (transformed) data, the black function the first principal component $\hat \phi_1$. Blue lines visualize the effect of the first principal component as perturbation from the mean ($\hat \mu \pm \hat \lambda_1^{1/2} \hat \phi_1$).
Left: Tangent space $T_\mu(\mathcal{T})$. Center: Sphere $S^\infty(\mathcal{T})$. Right: Space of warping functions $\Gamma(\mathcal{T})$.
The green curves corresponds to a new observation (solid) and its reconstruction (dotted) based on the first principal component.
}
\label{fig:exampleSRVF}
\end{figure}

\textbf{Bayes space transformation:}
Due to their typical characteristics, warping functions $\gamma$ can also be interpreted as generalized cumulative distribution functions of continuous random variables $X: \Omega \to \mathcal{T}$ in the sense that $\gamma(a) = a, \gamma(b) = b$ and $\gamma$ is monotonically increasing. Taking the first derivative $\gamma'$ yields a unique (scaled) probability density function on $\mathcal{T}$. While this has been mentioned in \citet{HadjipantelisEtAl:2015}, they do not exploit the geometric structure of the transformed functions. As shown in \citet{EgozcueEtAl:2006}, density functions, or, more precisely, equivalence classes of these functions, form a vector space, known as Bayes Hilbert space $B^2(\mathcal{T})$.  We propose to make use of this specific space to define a PCA for warping functions.

Following \citet{EgozcueEtAl:2006}, two functions $f,g \in B^2(\mathcal{T})$ are equivalent in the Bayes Hilbert space, if they are proportional ($f = \alpha g$). A natural representative for each class is given by the function integrating to $\eta = b-a$, that we interpret as the derivative of a warping function. For $f,g \in B^2(\mathcal{T})$ and $\alpha \in {\mathbb R}$, operations on $B^2(\mathcal{T})$ are defined as \citep{EgozcueEtAl:2006}
\begin{align*}
(f \oplus g)(t) &= \frac{f(t)g(t)}{\int_a^b f(s) g(s) \mathrm{d} s}
\qquad
(a \odot f)(t) = \frac{f(t)^\alpha}{\int_a^b f(s)^\alpha \mathrm{d} s} \\
\scalB{f}{g} &= \frac{1}{2\eta} \int_a^b \int_a^b \log\left(\frac{f(x)}{f(y)}\right)\log\left(\frac{g(x)}{g(y)}\right) \mathrm{d} y \mathrm{d} x.
\end{align*}
The centred log-ratio transform $\psi_B: B^2(\mathcal{T}) \to L^2(\mathcal{T})$ 
\[
\psi_B(f)(t) = \log(f(t)) - \frac{1}{\eta} \int_a^b \log(f(x)) \mathrm{d} x
\qquad
\psi_B^{-1}(f)(t) =\eta \cdot \frac{\exp(f(t))}{\int_a^b \exp(f(s)) \mathrm{d} s}
\]
is an isometric isomorphism with respect to the norm induced by $\scalB{\cdot}{\cdot}$.
Therefore $\psi_B(f \oplus g) = \psi_B(f) + \psi_B(g)$, $\psi_B(\alpha \odot f) = \alpha \cdot \psi_B(f)$ and $\scalB{f}{g} = \scal{\psi_B(f)}{\psi_B(g)}$ with the standard operations ($+, \cdot, \scal{\cdot}{\cdot}$) on $L^2(\mathcal{T})$ \citep{EgozcueEtAl:2006}.  One may note that \citet{HadjipantelisEtAl:2015} use a similar log-transformation for warping functions as $\psi_B$, but without the integral term. 

The geometric structure of $B^2(\mathcal{T})$ allows to easily transfer PCA to density functions by transforming them to $L^2(\mathcal{T})$, calculating the PCA in this space and transforming the result back to $B^2(\mathcal{T})$ \citep{HronEtAl:2016}. Since the mapping between a warping function $\gamma$ and the associated density is one-to-one, we may directly apply PCA based on this Hilbert space transformation to warping functions, too. 
The resulting principal components for warping functions give more meaningful results in our toy example, as shown in Fig.~\ref{fig:exampleBayes}. Moreover, the reconstruction of the green curve based on the first principal component is almost perfect.

As shown in \citet{HronEtAl:2016}, the transformation $\psi_B$ does also not map into full $L^2(\mathcal{T})$, but into the subspace $U_B(\mathcal{T}) = \{v \in L^2(\mathcal{T}) \colon \int_a^b v(s) \mathrm{d} s = 0\}$. This space is equivalent to the space $T_\mu(\mathcal{T})$ in the SRVF approach, if we choose $\mu$ the SRVF associated with the identity warping function, as proposed in \citet{LeeJung:2017}.
An important difference between the SRVF approach and the Bayes space transformation, however, is that the centered log-ratio transformation $\psi_B$ gives a one-to-one transformation from $B^2(\mathcal{T})$ to $U_B(\mathcal{T})$, meaning that $\operatorname{im}(\psi_B)$ covers full $U_B(\mathcal{T})$, which is closed under vector space operations. Leaving $\operatorname{im}(\psi_B)$ in principal component analysis or regression is thus not possible. Particularly, this means that the constraint $\int_a^b v(s) \mathrm{d} s = 0$ is automatically fulfilled for the principal components and does not need to be explicitly imposed in the implementation. Moreover, as an isometric isomorphism, $\psi_B$ preserves a lot more of the geometric structure of the transformed warping functions than $\tilde \psi_{S,\mu}$, which only preserves distances.

\begin{figure}
\begin{center}
\includegraphics[width = \textwidth]{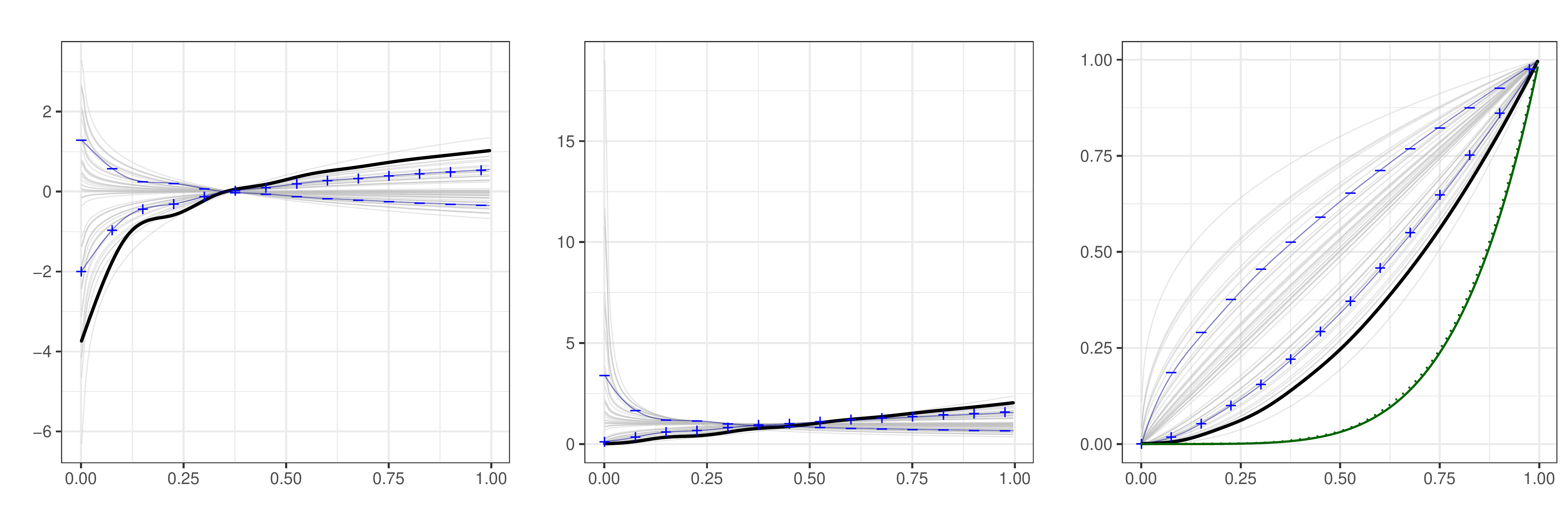}
\end{center}
\caption{
PCA of warping functions based on $\psi_B$ for the same warping functions as in Fig.~\ref{fig:exampleSRVF} and different stages of transformation. The grey functions represent the (transformed) data, the black function the first principal component $\hat \phi_1$. Blue lines visualize the effect of the first principal component as perturbation from the mean ($\hat \mu \pm \hat \lambda_1^{1/2} \hat \phi_1$).
Left: $L^2(\mathcal{T})$. Center: Bayes space $B^2(\mathcal{T})$. Right: Space of warping functions $\Gamma(\mathcal{T})$.
The green curves corresponds to a new observation (solid) and its reconstruction (dotted) based on the first principal component.
}
\label{fig:exampleBayes}
\end{figure}

\textbf{Other transformations:}
Principal component analysis  of density functions based on transformations to $L^2(\mathcal{T})$ has been considered more generally in \citet{PetersenMueller:2016}. They present two alternative transformations that we can again apply to the derivatives of warping functions.  
The log-hazard transformation is
\[
\psi_H(f) = \log \left( \frac{f}{1 - F} \right)
\]
with $F$ the warping function scaled to $[0,1]$ and $f$ the associated density. Since hazard functions are known to diverge  at the right endpoint of $\mathcal{T}$, the transformation is considered only on a subinterval $[a, b - \delta \eta]$ with $\delta$ a threshold parameter. For the back transformation, uniform weight is assigned to $t \in ( b - \delta \eta, b] $.
Alternatively, \citet{PetersenMueller:2016} consider the log quantile density transformation
\[
\psi_Q(f) = - \log(f(Q))
\]
with $Q$ the inverse (quantile) function associated with $F$. Of the two, \citet{PetersenMueller:2016}  recommend to use the log-quantile transformation. 

Regarding the practical usability in the context of warping functions, the log-hazard transformation $\psi_H$ may be highly influenced by the threshold parameter $\delta$, as shown in Fig.~\ref{fig:exampleLH} in the appendix  for the curves of our toy example. There we choose $\delta = 0.05$, which corresponds to cutting the warping functions at the $95\%$ quantile of the identity warping function. On the other hand, the log-quantile density transformation $\psi_Q$ requires numerical inversion of the warping functions $\gamma$, which may also lead to instabilities. Regarding the reconstruction of the new green curve in Fig.~\ref{fig:exampleLH}, the principal component analysis based on log-hazard transformation performs similarly well as the one based on the clr transformation. By contrast, for the log-quantile transformation, the prediction is rather poor, as it is shifted considerably towards the identity warping. This, however, is in contradiction with the recommendation given in \citet{PetersenMueller:2016}, as it would favour the log-hazard transformation. 

\textbf{General framework:}
All discussed transformations can be subsumed as transformations $\Psi_\ast: \Gamma(\mathcal{T}) \to L^2(\mathcal{T})$ with $\Psi_\ast = \psi_\ast \circ D$. Here $D$ denotes the differential operator that maps a warping function $\gamma$ to its density $\gamma'$. The restriction $\gamma(a) = a$ and $\gamma(b) = b$ makes this a one-to-one mapping. The mapping $\psi_\ast$ is a transformation from the space of density functions to $L^2(\mathcal{T})$ and  depends on the transformation. In particular, the SRVF approach can also be considered as such a transformation with $\psi_{S,\mu}(q) = \tilde \psi_{S,\mu}(\sqrt{q})$ and $\psi_{S,\mu}^{-1} (v) = [\tilde \psi_{S,\mu}^{-1}(v)]^2$. For the Bayes space approach, $\psi_\ast$ corresponds to the centred log-ratio transform $\psi_B$ and for the log-hazard transformation and the log-quantile density transformation proposed in \citet{PetersenMueller:2016}, $\psi_\ast$ corresponds to $\psi_H$ and $\psi_Q$, respectively. In order to compare the different transformations, one can therefore focus on $\psi_\ast$ only. 

As discussed before, the SRVF transformation via $\tilde \psi_{S,\mu}$ does not necessarily preserve structure in the transformed warping functions, as back  transformation via  $\tilde \psi_{S,\mu}^{-1}$ may result in functions outside $S_+^\infty(\mathcal{T})$, having negative values. The alternative formulation $\Psi_S = \psi_{S,\mu} \circ D$ respects this structure better, as $\psi_{S,\mu}^{-1}$ always yields valid density functions. However, this reformulation does not change the fact that $\psi_{S,\mu}$ is not surjective and therefore PCA based on the SRVF transformation does not necessarily return structures that resemble those of the original data. Moreover, the transformation may not be computationally robust for functions close to $\mu$ (in $S_+^\infty(\mathcal{T})$) or $v_0$ (in $L^2(\mathcal{T})$).
The centred-log-ratio transform $\psi_B$ is an isometric isomorphism between $B^2(\mathcal{T})$ and $L^2(\mathcal{T})$. It hence preserves the Hilbert space structure and seems particularly suitable for PCA. Due to the log-transformation, numerical instabilities may occur in regions in which $\gamma'(t)$ is close to zero,  where the warping function $\gamma$ is near the degenerate constant case.
The alternative transformations proposed in \citet{PetersenMueller:2016} do not represent isometric isomorphisms, but preserve the space structures by construction. However, they may be affected by numerical instabilities, as shown in the toy example (Fig.~\ref{fig:exampleLH}).
Overall, the clr transformation seems to be the most suitable transformation for principal component analysis.

\section{Modes of joint variation in amplitude and phase}
\label{sec:Frechet}
In order to study the joint variation of amplitude and phase in the data, \citet{LeeJung:2017} propose to concatenate the registered functions $w_i$ and the transformed warping functions $v_i = \Psi(\gamma_i)$, including a weighting parameter $C$:
\[
g^C_i(t) = 
\begin{cases}
w_i(t) \quad t \in [a,b) \\
C v_i(t-(b-a)) \quad t \in [b, 2b -a).
\end{cases}
\]
Remember that the warping is assumed to be given, i.e. in practical applications, $w_i$ and $\gamma_i$ have to be replaced by estimates obtained from the data.

For the transformation, \citet{LeeJung:2017} use $\Psi = \Psi_S$, but the method can be directly transferred to general transformations $\Psi \colon \Gamma(\mathcal{T}) \to L^2(\mathcal{T})$. The principal component analysis is then based on these concatenated functions, $g^C_i$, evaluated  on a fine grid. The resulting PCs are then separated and transformed back for the final interpretation. 

As noted in \citet{LeeJung:2017}, one may  alternatively use methods for multivariate functional principal component analysis \citep[MFPCA, ][]{ChiouEtAl:2014, HappGreven:2018}. These approaches are indeed more appropriate as they better reflect the characteristic nature of the data in terms of bivariate functions ${\bm z}_i = (w_i, v_i) \in L^2(\mathcal{T}) \times L^2(\mathcal{T}) =: \mathcal{H}$ and therefore we will use them in the following. Moreover, the bivariate formulation  ensures that we do not need to deal with potential incontinuities of $g^C_i$ at $b$. The approach in \citet{HappGreven:2018} estimates the multivariate functional PCA via a  univariate functional principal component analysis followed by a PCA of the combined score vectors based on a theoretical equivalence result. In this sense, it is similar to the approach in \citet{HadjipantelisEtAl:2015}, who also calculate a separate functional principal component analysis, but then use a linear mixed model to capture joint variation in the scores. The two-step approach in \citet{HappGreven:2018} can also easily be applied to warped curves in ${\mathbb R}^2$ or ${\mathbb R}^3$ or warped images. In the case of two-dimensional curves one would for example have a registered function for the $x$- and $y$-coordinate and a warping function. One would then apply the method to the trivariate vector of all three functions.

The weighting factor $C$ can directly be included into a weighted scalar product $\myscal{\bm{z}_i}{\bm{z}_j}_w =  \scal{w_i}{w_j} + C^2 \scal{v_i}{v_j}$. This corresponds to the standard scalar product on $\mathcal{H}$ for the bivariate function $\tilde { \bm z}_i = (w_i, C v_i)$, which is mimicked by $g^C_i$. 
In analogy to \citet{LeeJung:2017}, an optimal choice for $C$ can be found by optimizing the reconstruction based on the first $M$ principal component. More specifically, let the truncated and estimated Karhunen-Lo{\`{e}}ve representation of the bivariate function ${\bm z}_i $ be
\[
\bm{  \hat  z}_i^{[M]} =  { \bm{ \hat \mu_z}} + \sum_{m = 1}^M \hat \rho_{im}   \hat {\bm \phi}_m
\]
with  $\bm{ \hat \mu_z} = (\bar w, \bar v)$ the bivariate mean function, $\hat  {\bm \phi}_m = \left(\hat \phi_m\h{w}, \hat \phi_m\h{v}\right)$ the $m$-th bivariate principal component and $\hat \rho_{im} = \myscal{{\bm z}_i - \bm{ \hat \mu_z}}{ \hat{\bm \phi}_m}_w$ the observation-specific principal component scores \citep[see][for further details]{HappGreven:2018}. Denoting the elements of $\bm{  \hat  z}_i^{[M]}$ by $\hat w_i^{[M]}$ and $\hat v_i^{[M]}$, respectively, a reconstruction of $x_i = w_i \circ \gamma_i $ is given by $ \hat x_i^{[M]} = \hat w_i^{[M]}  \circ \Psi ^{-1}\left(\hat v_i^{[M]}\right) $. For given $M$, an optimal value of $C$ can be found as
\[
\hat C = \operatorname{arg\,min}_{C > 0} \frac{1}{N} \sum_{i = 1}^N \normsq{x_i - \hat x_i^{[M]}}\,.
\]
This choice makes sure that the principal component analysis also yields optimal reconstructions with respect to the originally observed data.

The eigenvalue $\nu_m$ associated with the principal component $ {\bm \phi}_m$ can be interpreted as the amount of variability in  ${\bm z}_i $ that is explained by this principal component. Typically, an optimal choice for $M$ is obtained by 
\begin{equation}
\hat M = \min \left\lbrace M \in {\mathbb N}:  \sum_{m = 1}^M \hat \nu_m \middle/ \sum_{m = 1}^\infty \hat \nu_m > \tau \right\rbrace, 
\label{eq:Mhat}
\end{equation}
where $\tau$ is a threshold for the minimum proportion of variability in $\bm{z}_1 , \ldots , \bm{z}_N$ that is to be explained by the first $\hat M$ components (typically $\tau = 0.95$ or $\tau = 0.99$). In practice, the infinite sum in the denominator is replaced by the sum over all calculated eigenvalues. 

As seen in expression \eqref{eq:Mhat}, the value $\hat M$ relates to the variability in $\bm{z}_i = (w_i, v_i)$ and not in the decomposition of the original data, $\bm{\zeta}_i = (w_i, \gamma_i)$. In the following, we show that the ratio in \eqref{eq:Mhat} can also be interpreted as a proportion of Fr{\'{e}}chet variances for $\bm{\zeta}_i $ by defining an appropriate metric. This is a new result and has not been considered in \citet{LeeJung:2017} or \citet{HadjipantelisEtAl:2015}. The Fr{\'{e}}chet mean and variance generalize the concept of mean and variance to random variables on general metric spaces and have been studied  in the context of random densities in $B^2(\mathcal{T})$ by \citet{vanDenBoogaartEtAl:2014, PetersenMueller:2016}, among others.

\begin{definition}
\label{def:metric}
For $\bm{\zeta}_1, \bm{\zeta}_2 \in \mathcal{G} := L^2(\mathcal{T}) \times \Gamma(\mathcal{T})$ with $\bm{\zeta}_i = (w_i, \gamma_i)$ and $\bm{z}_i = (w_i, \Psi(\gamma_i)), i = 1,2$ define 
\[
d(\bm{\zeta}_1, \bm{\zeta}_2) = \mynorm{\bm{z}_1-\bm{z}_2}_w = \left[\normsq{w_1 - w_2} + C^2\normsq{ \Psi(\gamma_1)- \Psi(\gamma_2)} \right]^{1/2} \qquad \text{with}~C > 0.
\]
\end{definition}

As $d$  is induced by the norm $\mynorm{\cdot}_w$ on $\mathcal{H}$, it is immediately clear that $d$ defines a valid metric on $\mathcal{G}$. Based on this metric, we obtain the following result:

\begin{theorem}
Consider a random element $\bm{\zeta} \in \mathcal{G}$ with $\bm{\zeta} = (w, \gamma)$. Then the Fr{\'{e}}chet mean ${\mathbb E}_d(\bm{\zeta})$ and Fr{\'{e}}chet variance $\operatorname{Var}_d(\bm{\zeta})$ of $\bm{\zeta}$ based on the metric $d$ are given by
\[
{\mathbb E}_d(\bm{\zeta}) = \left({\mathbb E}(w), \Psi^{-1}({\mathbb E}(\Psi(\gamma)))\right), 
\qquad
\operatorname{Var}_d(\bm{\zeta}) = \int_a^b \operatorname{Var}(w(t)) + C^2\operatorname{Var}(\Psi(\gamma(t))) \mathrm{d} t,
\]
where ${\mathbb E}(f)(t) = {\mathbb E}[f(t)]$ denotes the usual mean function for random functions $f \in L^2(\mathcal{T})$.
\end{theorem}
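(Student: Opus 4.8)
The plan is to work directly from the definition of the Fréchet mean as the minimiser
\[
{\mathbb E}_d(\bm{\zeta}) = \operatorname{arg\,min}_{\bm{\eta} \in \mathcal{G}} {\mathbb E}\!\left[d(\bm{\zeta}, \bm{\eta})^2\right]
\]
and of the Fréchet variance as the associated minimal value $\operatorname{Var}_d(\bm{\zeta}) = {\mathbb E}[d(\bm{\zeta}, {\mathbb E}_d(\bm{\zeta}))^2]$, assuming throughout the integrability condition ${\mathbb E}\,\normsq{w} + C^2\, {\mathbb E}\,\normsq{\Psi(\gamma)} < \infty$ that guarantees finiteness. Writing a generic candidate as $\bm{\eta} = (u, \delta) \in \mathcal{G}$ and using that $\Psi$ is a bijection onto its image, I would reparametrise the second component by $v^\ast = \Psi(\delta)$, so that $\delta$ ranging over $\Gamma(\mathcal{T})$ corresponds to $v^\ast$ ranging over $\operatorname{im}(\Psi) \subseteq L^2(\mathcal{T})$. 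By Definition~\ref{def:metric} the objective then separates additively,
\[
{\mathbb E}\!\left[d(\bm{\zeta}, \bm{\eta})^2\right] = {\mathbb E}\,\normsq{w - u} + C^2\, {\mathbb E}\,\normsq{\Psi(\gamma) - v^\ast},
\]
so the two components can be optimised independently.

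Each summand is a mean-squared $L^2$ distance, which I would handle with the bias--variance identity in a Hilbert space: for any fixed $u \in L^2(\mathcal{T})$,
\[
{\mathbb E}\,\normsq{w - u} = {\mathbb E}\,\normsq{w - {\mathbb E}(w)} + \normsq{{\mathbb E}(w) - u},
\]
the cross term vanishing because ${\mathbb E}(w)$ is the mean function of $w$ (its Bochner integral in $L^2(\mathcal{T})$). The first term is constant in $u$ and the second is minimised, uniquely, at $u = {\mathbb E}(w)$, which gives the first component of the mean. The same identity applied to the second summand shows that ${\mathbb E}\,\normsq{\Psi(\gamma) - v^\ast}$ is minimised at $v^\ast = {\mathbb E}(\Psi(\gamma))$, so that, after applying $\Psi^{-1}$, the warping component of the mean is $\Psi^{-1}({\mathbb E}(\Psi(\gamma)))$, as claimed.

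I expect the one genuinely delicate point to be that the unconstrained minimiser ${\mathbb E}(\Psi(\gamma))$ must actually lie in the feasible set $\operatorname{im}(\Psi)$ for $\Psi^{-1}({\mathbb E}(\Psi(\gamma)))$ to be well defined and to furnish the minimiser over $\delta \in \Gamma(\mathcal{T})$. This is exactly where the choice of transformation matters: for the clr transform one has $\operatorname{im}(\psi_B) = U_B(\mathcal{T})$, a closed linear subspace, and since $\Psi(\gamma) \in U_B(\mathcal{T})$ almost surely while the zero-integral constraint is preserved under expectation, the membership ${\mathbb E}(\Psi(\gamma)) \in \operatorname{im}(\Psi)$ holds automatically; for a general $\Psi$ it must be imposed as a condition. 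This is the only step that is not purely formal, and it reinforces the earlier argument in favour of the clr transformation.

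Finally, for the variance I would substitute the minimiser back in. The cross terms having vanished, the minimal value equals the two residual terms ${\mathbb E}\,\normsq{w - {\mathbb E}(w)} + C^2\, {\mathbb E}\,\normsq{\Psi(\gamma) - {\mathbb E}(\Psi(\gamma))}$. Expanding each squared norm as $\int_a^b (\cdot)^2\,\mathrm{d}t$ and interchanging expectation and integration---justified by Tonelli's theorem since the integrands are nonnegative---turns the first term into $\int_a^b {\mathbb E}[(w(t) - {\mathbb E}[w(t)])^2]\,\mathrm{d}t = \int_a^b \operatorname{Var}(w(t))\,\mathrm{d}t$, and the analogous manipulation of the second term yields $\int_a^b \operatorname{Var}(\Psi(\gamma)(t))\,\mathrm{d}t$, giving the stated formula for $\operatorname{Var}_d(\bm{\zeta})$.
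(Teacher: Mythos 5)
Your proposal is correct and takes essentially the same route as the paper's proof: the objective separates additively over the two components, each summand is minimised by completing the square (the Hilbert-space bias--variance identity), and the variance formula follows by substituting the minimiser back and interchanging expectation with the time integral. Where you go beyond the paper is in making explicit that ${\mathbb E}(\Psi(\gamma))$ must lie in $\operatorname{im}(\Psi)$ for $\Psi^{-1}({\mathbb E}(\Psi(\gamma)))$ to be well defined and to realise the infimum over $\gamma_\theta \in \Gamma(\mathcal{T})$; the paper assumes this silently, even though its own Section~\ref{sec:trafoFuns} shows that $\operatorname{im}(\tilde \psi_{S,\mu})$ is not closed under averaging for the SRVF transformation, so your remark that this feasibility condition holds automatically only for the clr transform (where the image is the linear subspace $U_B(\mathcal{T})$, preserved under expectation) is a genuine sharpening rather than pedantry.
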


\begin{proof}
Fr{\'{e}}chet mean:
\begin{align*}
{\mathbb E}_d(\bm{\zeta})
& = \operatorname{arg\,inf}_{\bm{\theta} \in \mathcal{G}} {\mathbb E} \left[d(\bm{\zeta},\bm{\theta})^2 \right]
= \operatorname{arg\,inf}_{(w_\theta, \gamma_\theta) \in L^2(\mathcal{T}) \times \Gamma(\mathcal{T})} {\mathbb E} \left[
\normsq{w - w_\theta} + C^2 \normsq{\Psi(\gamma) - \Psi(\gamma_\theta)}  \right] \\
&= \left( \operatorname{arg\,inf}_{w_\theta \in  L^2(\mathcal{T})}  {\mathbb E} \left[ \normsq{w - w_\theta} \right],
 \operatorname{arg\,inf}_{\gamma_\theta \in \Gamma(\mathcal{T})} {\mathbb E} \left[
\normsq{\Psi(\gamma) - \Psi(\gamma_\theta)}\right] \right)
 = \left( {\mathbb E}(w), \Psi^{-1} ({\mathbb E}(\Psi(\gamma))) \right) ,
\end{align*}
where the last step can be seen e.g.\ by completing the squares.

Fr{\'{e}}chet variance: 
\begin{align*}
\operatorname{Var}_d(\bm{\zeta})  &= {\mathbb E}(d(\bm{\zeta}, {\mathbb E}_d(\bm{\zeta}))^2)
 = {\mathbb E} \left[\normsq{w - {\mathbb E}(w)} + C^2\normsq{\Psi(\gamma) - \Psi(\Psi^{-1}({\mathbb E}(\Psi(\gamma))))} \right] \\
& = {\mathbb E} \left[ \int_a^b (w(t) - {\mathbb E}(w(t)))^2   \mathrm{d} t  +  C^2\int_a^b \left(\Psi(\gamma(t)) - {\mathbb E}(\Psi(\gamma(t)))\right)^2 \mathrm{d} t  \right] \\
& =  \int_a^b \operatorname{Var}(w(t)) \mathrm{d} t  + C^2 \int_a^b \operatorname{Var}(\Psi(\gamma(t))) \mathrm{d} t
\end{align*}
\end{proof}

The Fr{\'{e}}chet mean of $\bm{\zeta} $ based on $d$ is hence given by the mean function of the transformed data $\bm{z} $, with the warping part back-transformed to $\Gamma(\mathcal{T})$ via $\Psi^{-1}$. Similarly, the Fr{\'{e}}chet variance can be obtained by a weighted sum of the integrated pointwise variances of the elements of $\bm{z}$.

Using the infinite bivariate Karhunen-Lo{\`{e}}ve expansion ${\bm z} = {\bm \mu_z} + \sum_{m = 1}^\infty \rho_{m}  {\bm \phi}_m$ of a random element ${\bm z} \in \mathcal{H}$, the Fr{\'{e}}chet variance of the associated $\bm{\zeta}$ can be rewritten as
\begin{align*}
\operatorname{Var}_d(\bm{\zeta}) & =  \int_a^b \operatorname{Var}(w(t)) \mathrm{d} t  + C^2 \int_a^b \operatorname{Var}(\Psi(\gamma(t))) \mathrm{d} t \\
&= \int_a^b \operatorname{Var}(\sum_{m = 1}^\infty \rho_m \phi_m\h{w}(t)) + C^2 \operatorname{Var}(\sum_{m = 1}^\infty \rho_m \phi_m\h{v}(t))) \mathrm{d} t \\
& = \sum_{m = 1}^\infty  \nu_m \int_a^b  [\phi_m\h{w}(t)^2 + C^2 \phi_m\h{v}(t)^2] \mathrm{d} t  =\sum_{m = 1}^\infty  \nu_m  \mynorm{{\bm \phi}_m}_w^2 = \sum_{m = 1}^\infty  \nu_m,
\end{align*}
since the scores $\rho_m$ are uncorrelated with variance $\operatorname{Var}(\rho_m) = \nu_m$ and the principal components have weighted norm $1$. Thus, the Fr{\'{e}}chet variance of $\bm{\zeta}$ equals the total variance in $\bm{z}$.

For given $M$, we obtain the fundamental variance decomposition
\[
\operatorname{Var}_d(\bm{\zeta}) = \sum_{m = 1}^M \nu_m + \sum_{m = M+1}^\infty \nu_m 
= \operatorname{Var}_d(\bm{\zeta}^{[M]}) + \operatorname{Var}_d(\bm{\zeta} - \bm{\zeta}^{[M]}),
\]
i.e. the Fr{\'{e}}chet variance explained by the first $M$ principal components is $V_{[M]} :=\operatorname{Var}_d(\bm{\zeta}^{[M]})$ and the fraction arising in the definition of $\hat M$ in  \eqref{eq:Mhat} is the proportion that $V_{[M]}$ explains with respect to the total Fr{\'{e}}chet variance in $\bm \zeta$. 
A similar result has been obtained for density functions in \citet{PetersenMueller:2016}. Using the metric $d$ as introduced in Definition~\ref{def:metric}, our results show that the variance decomposition is  valid in the more complex space $\mathcal{G}$, that combines warping functions $\gamma_i$ and registered functions $w_i$.

\section{Application to Earthquake simulations}
\label{sec:earthquake}

Using the proposed Bayes space transformation $\Psi_B$ and multivariate functional principal component approach, we apply our method to a subset of data from a seismological \emph{in silico} experiment based on the Mw 6.7 1994 Northridge earthquake, a blind thrust event that was felt over 200,000km$^2$. 
The induced ground shaking exceeded engineering building codes and resulted in sixty fatalities, >7,000 injured, 40,000 buildings damaged and 44 Billion \$ in economic losses. The hypocenter was located at about 19 km depth on a fault dipping southward at about 35$^\circ$ below the San Fernando Valley in the Los Angeles metropolitan area \citep{Hauksson1995}. The strongly pronounced topographic relief in the vicinity of the fault is associated with dextral transpression at the Pacific-North America plate boundary \citep{Montgomery1993}. The patterns of damage that occurred during the Northridge event showed irregular distributions. Generally, the region closest to the earthquake was shaken most severely. However, there were also isolated pockets of damage at distant locations.

Estimation of realistic ground motions for complex surface topography is a long-standing challenge in computational seismology \citep[e.g.][]{Chaljub2010}. 
The influence of topography on local site responses to seismicity are challenging to consider in current seismic hazard assessment typically based on empirical ground motion prediction equations, even though surface topography can have significant impact on seismic wave propagation and earthquake ground motions \citep{Bouchon1973,Boore1973}. For the data analysed here, physics-based earthquake scenarios were modelled using 3D unstructured meshes \citep{Rettenberger2016} constructed from geological constraints such as high-resolution topography data and the SCEC Community Fault Model combined with a 1D subsurface structure \citep{Wald1996}. Using SeisSol, an open source earthquake simulation package that couples 3D seismic wave propagation to the simulation of dynamic rupture propagation across earthquake fault zones \citep{Pelties2014, Heinecke2014, Uphoff2017}, multiple simulations varying the initial fault stress and strength conditions were performed. Time series of absolute ground velocity were recorded at a dense network of virtual seismometers distributed across Southern California. A more detailed description of the data and an analysis on the full set of original, unregistered, data can be found e.g.\ in \citet{BauerEtAl:2018}.

For our example, we focus on seismometer locations  with a maximum distance of 40~km from the hypocenter from two simulations which showed the strongest ground velocity movements on average. This results in a total of 1,558 observation units, each of which is recorded at 2~Hz over 30 seconds for a total of 61 timepoints. We pre-smoothed the ground velocity curves using a Tweedie distribution with log-link for the response and 40 cubic regression splines with penalized second derivative using R-package \texttt{mgcv} \citep{Wood:Pya:Saefken:2016} to model the evolution over time before registration. 
The corresponding smoothed curves representing $\log(1 + V_{i}(t))$ for ground velocity $V$ at locations $s_i$ and time $t$ are shown in Fig.~\ref{fig:appData}. The figure also shows the warping functions and the aligned functions after SRVF-based warping \citep{Tucker:2014}. This is the same warping approach as used in \citet{LeeJung:2017}, but other approaches are possible instead. The results of the warping are the inputs of the analysis. The warping functions $\gamma_i$ are transformed to $L^2(\mathcal{T})$ via the Bayes-space transform  $\Psi_B$. Together with the registered functions $w_i$, the transformed warping functions are then  fed into the MFPCA, which is calculated using the R-package \texttt{MFPCA} \citep{MFPCA}. We choose $M = 10$ principal components, for which the optimal weight is found to be  $\hat C = 1.03$. The weight of the aligned functions $w_i$ in the MFPCA thus approximately equals that of the transformed warping functions.

The first principal component, which explains $26.8\%$ of the (Fr{\'{e}}chet) variability in the data, is illustrated in Fig.~\ref{fig:appPC}.  Positive scores in this component are associated with a larger amplitude than the mean with five clearly defined peaks that occur earlier than in the mean. For negative scores one observes a lower amplitude and slightly delayed peaks. Visualizing phase and amplitude variation for this component separately (middle and right panels, respectively) shows that the phase variation in this first principal component is a fairly constant shift in time up to the last peak, whereas its main characteristic is driven by amplitude variation. 
In other words, for positive scores in the first principal component, ground movement is stronger and begins slightly earlier, whereas for negative scores, ground movement is somewhat delayed and markedly less severe. Therefore, one would  naturally expect rather positive scores for the first principal component for seismometers closer to the hypocenter and more negative values for those further away.

The second principal component (representing $15.6\%$ of Fr{\'{e}}chet variability) shows a different pattern. Here the main component is phase variation, whereas amplitude variation plays only a minor role affecting local maxima and minima after 20s, see Fig.~\ref{fig:appPC}, bottom row. For functions with positive scores, we observe that peaks occur earlier than in the mean function and that this temporal shift increases over time. Moreover, the amplitude tends to decrease faster than in the mean. Conversely, for functions with negative scores we observe five increasingly delayed peaks with more slowly decreased amplitude after 20s.

Fig.~\ref{fig:appScores} contains a scatter plot of the first two principal component scores colored by hypocentral distance (left panel) and the spatial distribution of the first two PC score vectors (middle \& right panel). 
As expected, seismometers which are closer to the epicenter are characterized by positive scores for the first principal component (earlier and larger movement), i.e.\ having scores in the right part of the first scatterplot, whereas seismometers with a higher hypocentral distance show the opposite behaviour.
We observe pronounced directivity effects caused by the unilateral earthquake rupture and dynamically enforced by the fault morphology. 
The dipping orientation of the fault adds further spatial complexity to the ground shaking. 
 The middle panel of Fig.~\ref{fig:appScores} shows larger and earlier ground motions in lighter colors. 
Besides source effects, topographic effects such as amplification, deamplification, scattering, and channeling of seismic waves strongly affect the seismic data analyzed. 
Concerning the second principal component, the image is more heterogeneous with some local spots having especially high values, i.e.\ here the ground movements arrive earlier but also decrease more rapidly. 

The seismological \emph{in silico} experiment presented in this section is exemplary for data in which both amplitude and phase variation are of interest. Using the Bayes space transformation for the warping functions, we preserve as much of the structure in the data as possible, as detailed in Section~\ref{sec:trafoFuns}. The bivariate functional principal components of the transformed warping functions and the registered functions are interpretable and provide interesting insights into the joint variation of amplitude and phase. In addition, the results found in Section~\ref{sec:Frechet} allow us to quantify the 
proportion of variability in the data explained by each principal component 
in terms of the Fr{\'{e}}chet variance using the metric defined in Definition~\ref{def:metric}. 
It is expected that more in-depth analyses on the complete data  will shed light on long-standing questions of source and path effects on ground motions.

\begin{figure}
\begin{center}
\includegraphics[width = \textwidth]{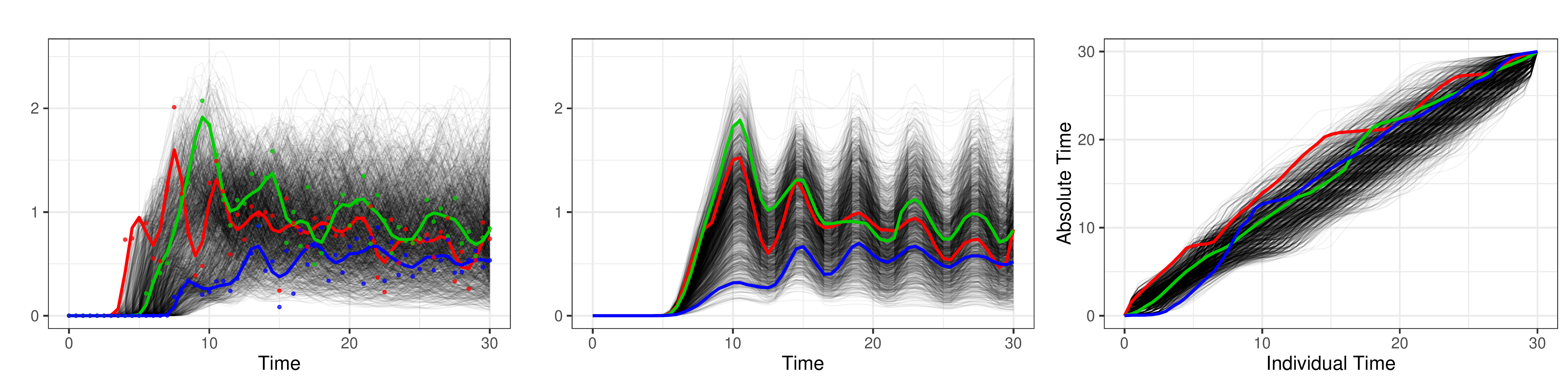}
\end{center}

\caption{Earthquake dataset representing 1,558 observations of simulated ground velocity over time (log-transformed). Left: smoothed curves. Middle: registered functions. Right: warping functions. The red (green/blue) curves have minimum (median/maximum) hypocentral distance in the dataset.}
\label{fig:appData}
\end{figure}

\begin{figure}
\begin{center}
\includegraphics[width = \textwidth]{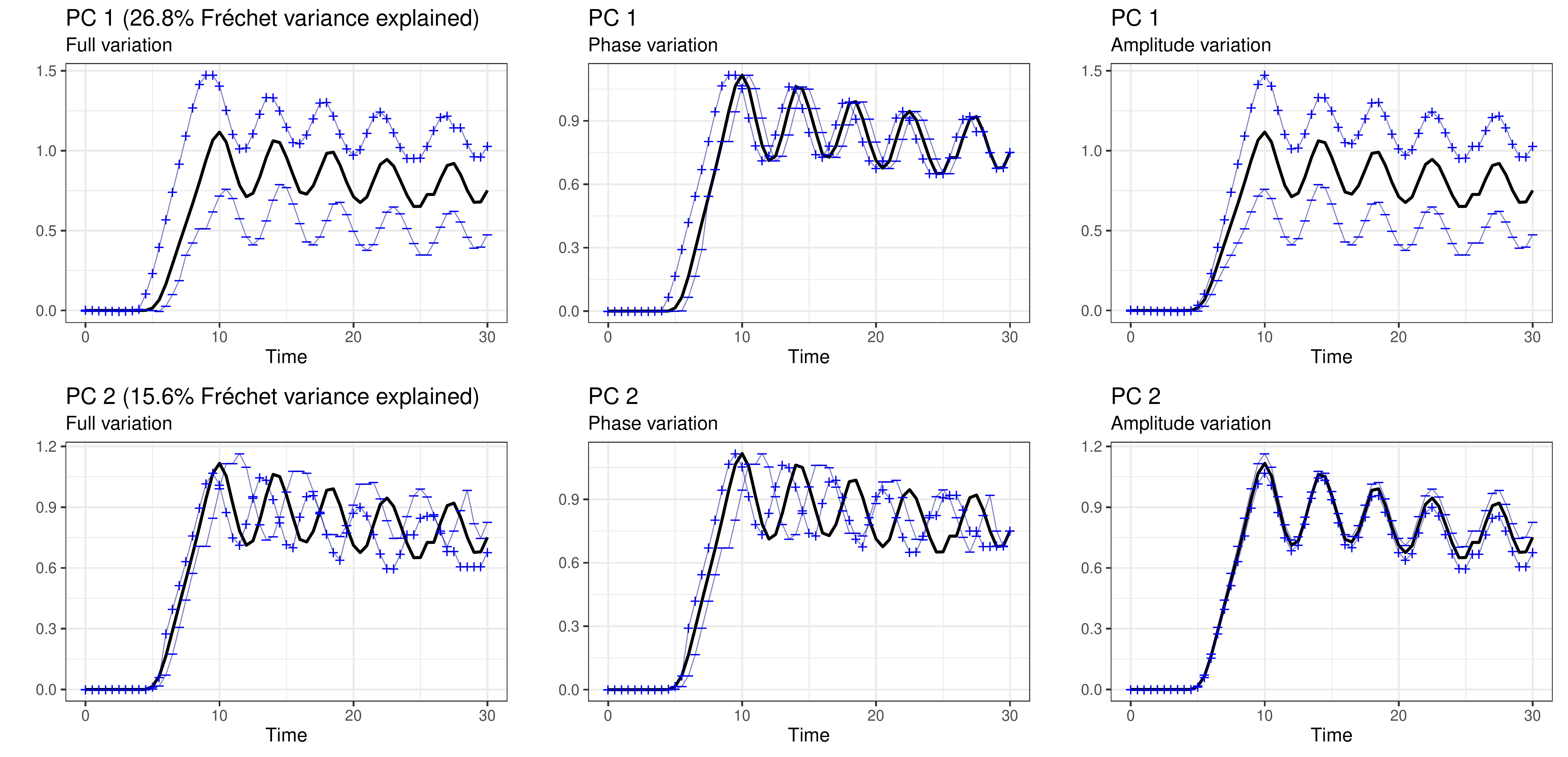}
\end{center}
\caption{Results for the first two principal components of the joint PCA for the earthquake dataset. The thick black curves represent the mean $\bar w \circ \Psi_B^{-1}(\bar v)$ and the blue lines correspond to  $(\bar w \pm \alpha_w \hat \nu_m^{1/2} \hat\phi_m\h{w} ) \circ \Psi_B ^{-1}(\bar v \pm  \alpha_v \hat \nu_m^{1/2} \hat\phi_m\h{v})$. 
First / second row: Effect of the first ($m = 1$) / second ($m = 2$) principal component.
  Left: Joint effect ($\alpha_w = 1, \alpha_v = 1$). Middle: Phase effect ($\alpha_w = 0, \alpha_v = 1$). Right: Amplitude effect ($\alpha_w = 1, \alpha_v = 0$).
Note that these visualizations show the absolute time of the aligned functions on the horizontal axis.
}
\label{fig:appPC}
\end{figure}

\begin{figure}
\begin{center}
\includegraphics[width = \textwidth]{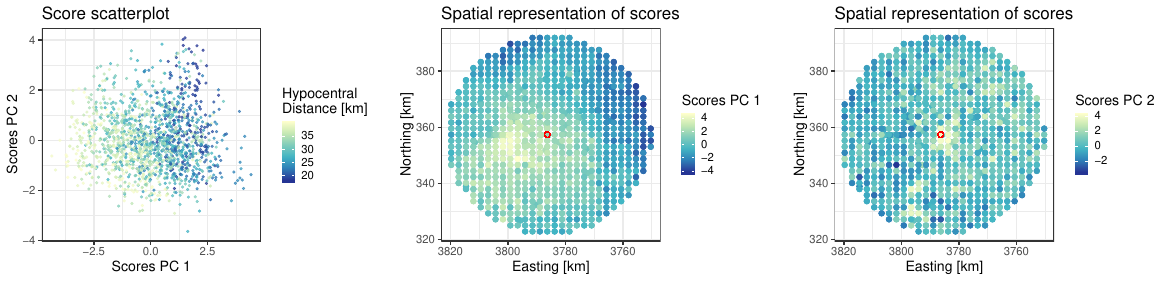}
\end{center}
\caption{Scores for the first two principal components of the combined PCA. Left: Scatterplot of the scores $\rho_{i1}, \rho_{i2}$ depending on the hypocentral distance. Middle / Right: Spatial distribution of the scores for the first / second principal component. 
Spatial locations of the seismometers are given in UTM coordinates. The red circle corresponds to the seismometer with minimal hypocentral distance (red curve in Fig.~\ref{fig:appData}).}
\label{fig:appScores}
\end{figure} 

\section*{Supplementary Material}
The supplementary files contain \texttt{R} code and data to fully reproduce the toy example in Section~\ref{sec:trafoFuns} and the application in Section~\ref{sec:earthquake}.

\section*{Acknowledgements}
We would like to thank Almond St{\"o}cker for drawing our attention to the theory of Bayes spaces for probability densities and David R{\"u}gamer for his valuable comments on the text. Regarding the seismological application, we are grateful to Helmut K{\"u}chenhoff and Alexander Bauer for their collaboration.


\bibliographystyle{apalike}
\bibliography{Happ_etal}

\begin{figure}
\begin{center}
\includegraphics[width = \textwidth]{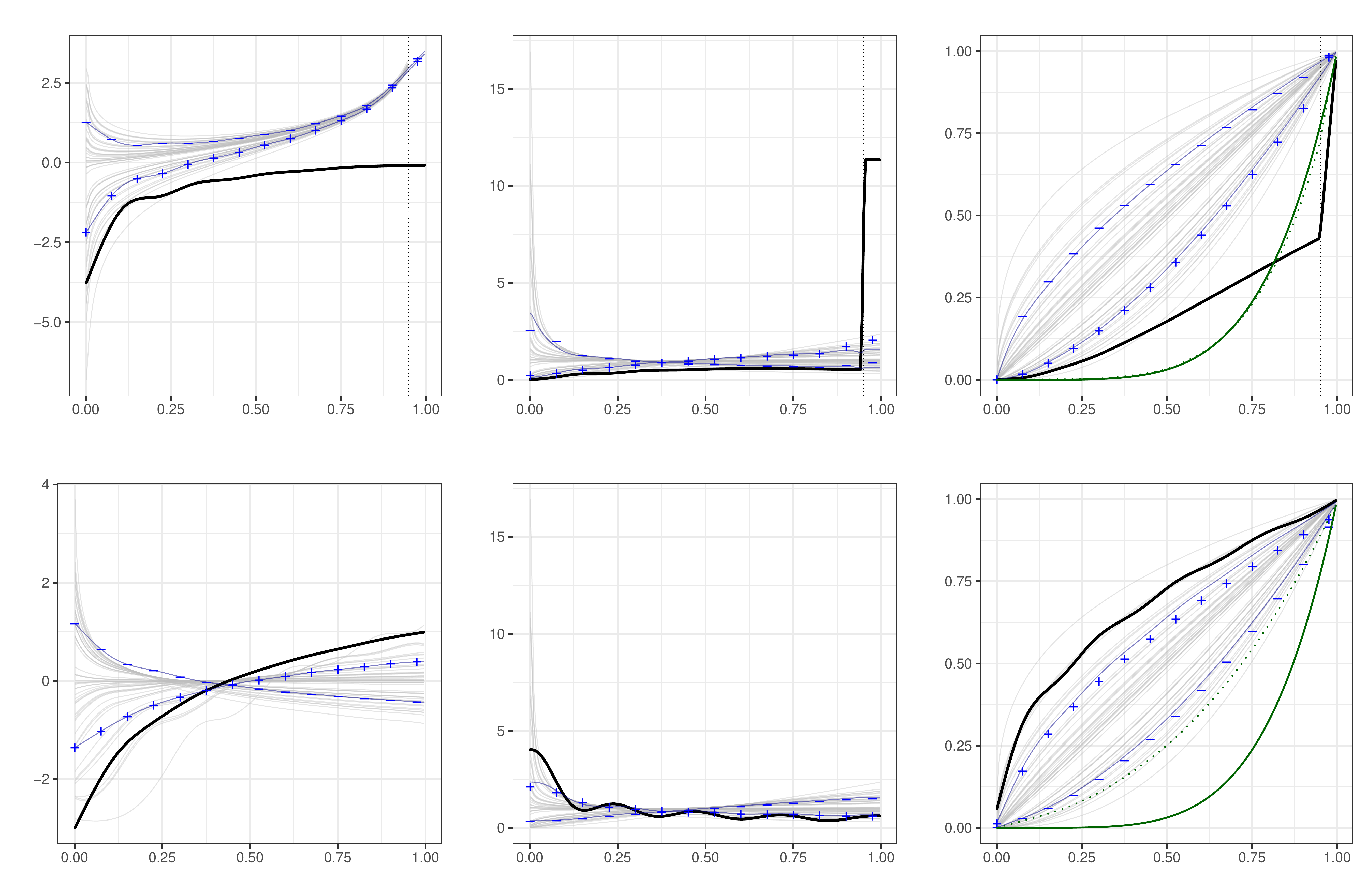}
\end{center}
\caption{
PCA of warping functions based on $\psi_H$ (top row) or $\psi_Q$ (bottom row) for the same warping functions as in Fig.~\ref{fig:exampleSRVF} and different stages of transformation. The grey functions represent the (transformed) data, the black function the first principal component $\hat \phi_1$. Blue lines visualize the effect of the first principal component as perturbation from the mean ($\hat \mu \pm \hat \lambda_1^{1/2} \hat \phi_1$).
Left: $L^2(\mathcal{T})$. Center: Density space $B^2(\mathcal{T})$. Right: Space of warping functions $\Gamma(\mathcal{T})$.
The dotted vertical line in the top row corresponds to the threshold parameter $\delta$ for the log-hazard transformation $\psi_H$, here chosen as $\delta = 0.05$.
The green curves corresponds to a new observation (solid) and its reconstruction (dotted) based on the first principal component.
}
\label{fig:exampleLH}
\end{figure}

\appendix

\begin{proof}[Proof of equation \eqref{eq:imdomPsiS}]
The first equation is clear by the definition of the mappings $\tilde \psi_{S,\mu},~\tilde \psi_{S,\mu}^{-1}$ and the spaces $S_+^\infty(\mathcal{T}),~S^\infty(\mathcal{T})$.

For the second equation, choose for example $\mu = q_0$ the SRVF associated with identity warping. 
By the Cauchy-Schwarz inequality, $\left| \scal{q}{\mu} \right | \leq \norm{q} \norm{\mu} = \eta
$ for all $q,\mu \in S_+^\infty(\mathcal{T})$.
Using that $q(t) \geq 0$ for all $t\in \mathcal{T}$ further yields
$
\scal{q}{\mu}= \int_a^b q(t) \mathrm{d} t \geq 0
$.
Therefore, 
$
\frac{\scal{q}{\mu}}{\eta} \in [0,1]$ and $\theta \in \left[0, \frac{\pi}{2} \right]
$.

Let now $v = \tilde \psi_{S,\mu}(q)$. It is easy to show that $\normsq{v} = \theta^2 $ \citep{SrivastavaBook}, which implies $\norm{v} \leq \frac{\pi}{2} $. The boundary cases are given by $\norm{v} \to 0$ for $q \to \mu$, hence $\scal{q}{\mu} \to \eta$ and  $\norm{v} \to \frac{\pi}{2}$ for $\scal{q}{\mu} =0$, meaning that $q \in T_\mu(\mathcal{T})$ already. Note that the latter case is only achieved by degenerate functions $q$ that correspond to stepwise constant warping functions $\gamma$.

Using $q(t) \geq 0$ for all $t \in \mathcal{T}$ further yields
\[
v(t) 
= \frac{\theta}{\eta^{1/2}\sin(\theta)} (q(t)- \cos(\theta) \mu(t))
\geq \frac{q(t) - \cos(\theta)}{\eta^{1/2}}
 \geq -\eta^{-1/2}.
\]
This shows
\[
\operatorname{im}(\tilde \psi_{S,\mu} )
\subseteq \left \{v \in T_\mu(\mathcal{T}) \colon \norm{v} \leq \frac{\pi}{2}, v(t) \geq -\eta^{-1/2} ~ \forall~ t \in \mathcal{T}\right\}
\subsetneq T_\mu(\mathcal{T}).
\]
Since $\norm{v} < \pi$ for all $v$ in $\operatorname{im}(\tilde \psi_{S,\mu} )$, the restriction of $\tilde \psi_{S,\mu}^{-1}$ to $\operatorname{im}(\tilde \psi_{S,\mu} )$ is a bijection \citep{SrivastavaBook}. For alternative choices of $\mu$, analogous results can be obtained.

\end{proof}

\end{document}